\let\csname equation*\endcsname\relax
\let\csname endequation*\endcsname\relax
\providecommand{\openone}{{\mathds 1}}
\newcommand{\reals}{{\mathds R}}
\newcommand{\soc}{{S}}
\newcommand{\M}{{\mathsf M}}
\newcommand{\CC}{{\mathcal C}}
\newcommand{\II}{{\mathcal I}}
\newcommand{\TT}{{\mathcal T}}
\newcommand{\IA}{{\mathcal X}}
\newcommand{\CA}{{\mathcal C}}
\newcommand{\OA}{{\mathcal A}}
\newcommand{\ST}{{\mathcal S}}
\newcommand{\ketbra}[1]{{\ket{#1}\!\bra{#1}}}
\newcommand{\KetBra}[2]{{\ket{#1}\!\bra{#2}}}
\newcommand{\norm}[1]{{\lvert{#1}\rvert}}
\newcommand{\abs}[1]{{\lvert{#1}\rvert}}
\newcommand{\vtr}[1]{{\boldsymbol #1}}
\newtheorem{lemma}{Lemma}
\DeclareMathOperator{\tr}{tr}
\begin{document}
\title{Memory cost of temporal correlations}

\author{Costantino Budroni}
\email{costantino.budroni@oeaw.ac.at}
\affiliation{Institute for Quantum Optics and Quantum Information (IQOQI), 
Austrian Academy of Sciences, Boltzmanngasse 3, 1090 Vienna, Austria}
\affiliation{Faculty of Physics, University of Vienna, Boltzmanngasse 5, 1090 Vienna, Austria}

\author{Gabriel Fagundes}
\email{gabrielf@fisica.ufmg.br}
\affiliation{Departamento de Física, Universidade Federal de Minas Gerais UFMG, 
P.O.~Box 702, 30123–970, Belo Horizonte, MG, Brazil}

\author{Matthias Kleinmann}
\email{matthias.kleinmann@uni-siegen.de}
\affiliation{Naturwissenschaftlich--Technische Fakultät, Universität Siegen, 
Walter-Flex-Straße 3, 57068 Siegen, Germany}

\begin{abstract}
A possible notion of nonclassicality for single systems can be defined on the 
basis of the notion of memory cost of classically simulating probabilities 
observed in a temporal sequence of measurements. We further explore this idea 
in a theory-independent framework, namely, from the perspective of general 
probability theories (GPTs), which includes classical and quantum theory as 
special examples. Under the assumption that each system has a finite memory 
capacity, identified with the maximal number of states perfectly 
distinguishable with a single measurement, we investigate what are the temporal 
correlations achievable with different theories, namely, classical, quantum, 
and GPTs beyond quantum mechanics. Already for the simplest nontrivial 
scenario, we derive inequalities able to distinguish temporal correlations 
where the underlying system is classical, quantum, or more general.
\end{abstract}

\maketitle

\section{Introduction}

Given a single quantum system, in what sense can we say that it has some 
nonclassical properties? The most celebrated phenomena where quantum systems 
depart from their classical counterpart involve notions such as entanglement 
\cite{ent_revH, ent_revG} and nonlocality \cite{BellOriginal, nonloc_rev}, 
which can be defined only in terms of multipartite systems. What if we are able 
to perform experiments only on a single, indivisible, system? Can we still say 
that the observed statistics has some ``nonclassical properties''? Some notion 
of nonclassicality have been proposed for single systems, such as contextuality 
\cite{Kochen1967} and nonmacrorealism \cite{LeggettPRL1985, EmaryRPP2014}. One 
may argue that such notions are limited to specific measurement procedures and 
hence are not fully satisfactory. Contextuality restricts the set of possible 
operations to compatible measurements, which in many cases need to be 
(approximately) projective or at least satisfy some analogous notion of 
repeatability and nondisturbance \cite{Guehne2010, Kujala2015}, in order to 
avoid the so-called ``compatibility loophole'' \cite{Guehne2010} or other 
similar classical explanations. Macrorealism has similar strong restrictions on 
the set of allowed measurements, namely, they must be noninvasive to avoid the 
clumsiness loophole or other forms of classical interpretation of the results 
\cite{WildeMizel2012}.

A strong motivation for developing such a notion of nonclassicality for single 
systems also arises from quantum information theory. Notions such as 
entanglement and nonlocality have been proved to play a role in quantum 
information tasks related to communication, such as, e.g., device-independent 
quantum key distribution \cite{AcinPRL2007}. That such notions should play a 
role also for tasks involving only single systems, such as, e.g., quantum 
computation, is less evident. Several recent results connected quantum 
contextuality with models of quantum computation such as, e.g., quantum 
computation via magic state injection or measurement based quantum computation 
\cite{Howard2014, DelfossePRX2015, RaussPRA2013, BermejoPRL2017, RaussPRA2017, 
AbramskyPRL2017, GalvaoPRA2017}. However, a natural question arises of whether 
this connection is fundamental or just related to the particular model used for 
quantum computation \cite{MarkiewiczPRA2014}. If one moves from compatible 
projective measurements to general instruments, it is no longer clear whether 
the notion of quantum contextuality make sense at all, due to the compatibility 
loophole mentioned above \cite{Guehne2010}.

In this paper, we go beyond such notions and introduce a notion of 
nonclassicality for the measurement statistics of a single system which is not 
restricted to specific measurement operations. The main tool of this 
investigation is the notion of memory cost of simulating temporal correlations. 
By temporal correlations we mean the observed statistics arising from sequences 
of measurements on a single system and memory roughly refers the amount of 
classical information that can be stored in the physical system.

The notion of memory cost has been explored in connection with classical simulations of quantum 
contextuality \cite{KleinmannNJP2011, Fagundes2017}, quantum simulation of classical 
stochastic processes~\cite{GarnerNJP2017} memory asymmetry between prediction and 
retrodiction~\cite{ThompsonPRX2018}, and in relation 
with the accuracy of classical and quantum clocks~\cite{Woods2018}. A related notion, i.e., 
that of communication cost, has been explored in relation to both Bell 
nonlocality \cite{Pironio_comm2003, Montina_comm2016} and temporal correlations 
\cite{BrierleyPRL2015, Zukowski2014}. Similar notions have been explored also 
in the prepare-and-measure scenario \cite{GallegoPRL2010, BrunnerPRL2013, 
DallarnoPRSA2017, DallarnoPRL2017, RossetPRX2018} and in connection with 
quantum information tasks such as random access codes \cite{BowlesPRA2015, 
TavakoliPRA2016, AguilarPRL2018, Miklin2019}.

In our approach, we go beyond the prepare-and-measure scenario by exploring 
arbitrary long sequences of measurements and we remove any restriction on the 
type of measurement by considering arbitrary quantum instruments. Our analysis 
is not only restricted to the differences between classical and quantum theory, 
but is extended to general probabilistic theories (GPTs) \cite{Ludwig:1985, 
Mittelstaedt:1998, Chiribella:2010PRA, Acin:2010PRL}, which embrace also the 
former theories. In particular, we derive inequalities on the observed 
probabilities that are able to discriminate between classical, quantum, and 
genuine GPT correlations. Moreover, as a further development of the ideas 
presented in Refs.~\cite{KleinmannNJP2011, Fagundes2017}, we show that in the 
framework of finite-state machines it is impossible to simulate contextual 
correlations on a qubit system, for a fixed initial state and arbitrary 
instruments.

The paper is organized as follows. In Sec.~\ref{sec:temp_corr}, we will 
introduce the basic notions and tools necessary for our analysis, namely, 
temporal correlations and the arrow of time polytope. In Sec.~\ref{sec:fin_st}, 
we will introduce finite-state machines in GPTs, in particular, also in 
classical and quantum theory. In Sec.~\ref{sec:temp_b}, we will discuss the 
existence of nontrivial temporal bounds for such theories and the impossibility 
of simulating contextual correlations on a qubit. Finally, we present the 
conclusions and an outlook of the paper.

\section{Temporal correlations}\label{sec:temp_corr}

We consider a box that accepts certain inputs from an input alphabet $\IA$ and 
produces outputs from an output alphabet $\OA$. The box is operated in a 
sequential fashion, see Fig.~\ref{fig:1}(a), such that, for instance, it first 
receives an input labeled by $x\in \IA$ yielding an output labeled by $a\in 
\OA$, subsequently it receives $y$ yielding $b$, and finally it receives $z$ 
yielding $c$. Prior to this sequence the box is initialized, such that its 
behavior is independent of anything except the input sequence $xyz$. 
Consequently, for a fixed input sequence $xyz\in \IA^3$, the admissible output 
sequences $abc\in \OA^3$ are governed by a probability distribution. If we now 
consider all possible inputs, we obtain the correlations $p(abc|xyz)$. Due to 
the time ordering of the inputs and outputs, these correlations must satisfy 
the arrow of time constraints \cite{ClementePRL2016},
\begin{gather}\label{eq:NS1}
\sum_c p(abc|xyz)= \sum_c p(abc|xyz')
\text{, for all } a,b\in \OA \text{ and all } x,y,z,z'\in \IA,
\\
\sum_{bc} p(abc|xyz)= \sum_{bc} p(abc|xy'z')
\text{, for all } a\in \OA \text{ and all } x,y,y',z,z'\in \IA.
\end{gather}
These constraints encode the fact that a future choice of an input, e.g., $z$ 
or $z'$ in Eq.~\eqref{eq:NS1}, must not influence previous outputs of the box, 
e.g., $a$ or $b$. This is in analogy to the nonsignaling conditions in the 
usual Bell scenario \cite{PopescuFPH1994}. The arrow of time constraints come 
solely from causality and hence, they must be satisfied not only in classical 
and quantum theory, but in any GPT.

We can represent the correlations $p(abc|xyz)$ as a vector with coordinates 
labeled by the possible sequences $abc$ and $xyz$. Due to the linearity of the 
arrow of time constraints, the set of correlations satisfying those forms a 
polytope. Its extremal points have been recently characterized 
\cite{AbbottPRA2016, HoffmannThesis2016, Hoffmann2018}. It is instructive to 
briefly sketch the central steps for the simple case of sequences of length 
three. All correlations in the corresponding polytope can be decomposed as
\begin{equation}\label{eq:decomp}
p(abc|xyz)= p(a|x) p(b|a;xy) p(c|ab;xyz),
\end{equation}
since the marginals on the right hand side are well defined (for the 
pathological cases where $p(ab|xy)=0$ we define the right hand side to be 
zero). Vice versa, taking valid probability distributions $p(a|x)$, 
$p(b|a;xy)$, $p(c|ab;xyz)$ over $a,b,c$, respectively, one always obtains an 
element of the polytope. Its extremal points are obtained by deterministic 
strategies, i.e., where each of the probability distributions on the right hand 
side of Eq.~\eqref{eq:decomp} consists only of probabilities $0$ or $1$. It 
easily follows that classical and quantum models can reach extremal points if 
enough memory is available. In more precise terms, each deterministic strategy 
can be reached if the box internally keeps a record of all previous inputs and 
outputs. Storing this record then requires the box to have memory. Of course, 
the notion of memory needs clarification, in particular if the box is described 
using quantum theory or a GPT, for details see Sec.~\ref{sec:fin_st}. Clearly, 
storing the full record of previous inputs and outputs is not necessarily 
memory optimal and gives rise to the question: What is the minimal number of 
states necessary to obtain certain correlations? How does such a number depend 
on the specific theory we use to describe the internals of the box?

\begin{figure}
\begin{tabular}{c@{\hspace{-2em}}r@{\hspace{2em}}c@{\hspace{-2em}}r}
\large{(a)}&&\large{(b)}&\\[-2em]
&\includegraphics[width=.41\textwidth]{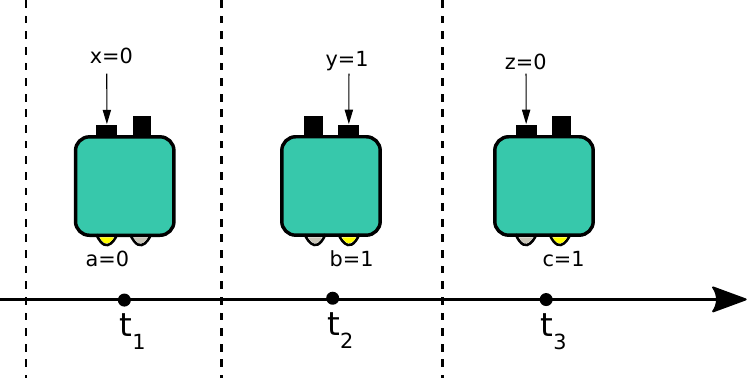}&
&\includegraphics[width=.41\textwidth]{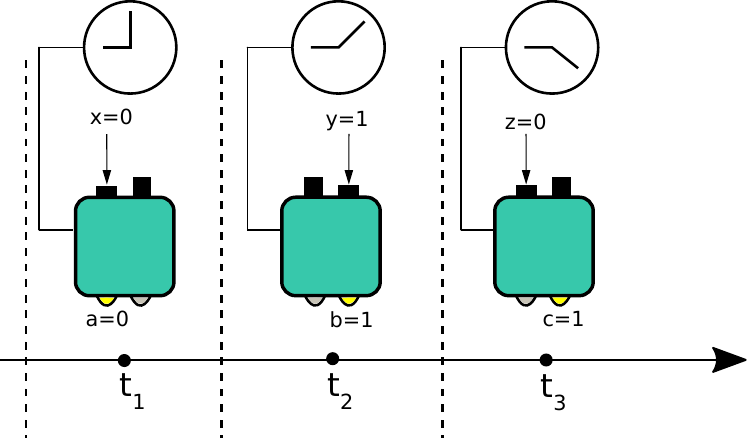}\\[1em]
\large{(c)}&&\large{(d)}&\\[-2em]
&\includegraphics[width=.48\textwidth]{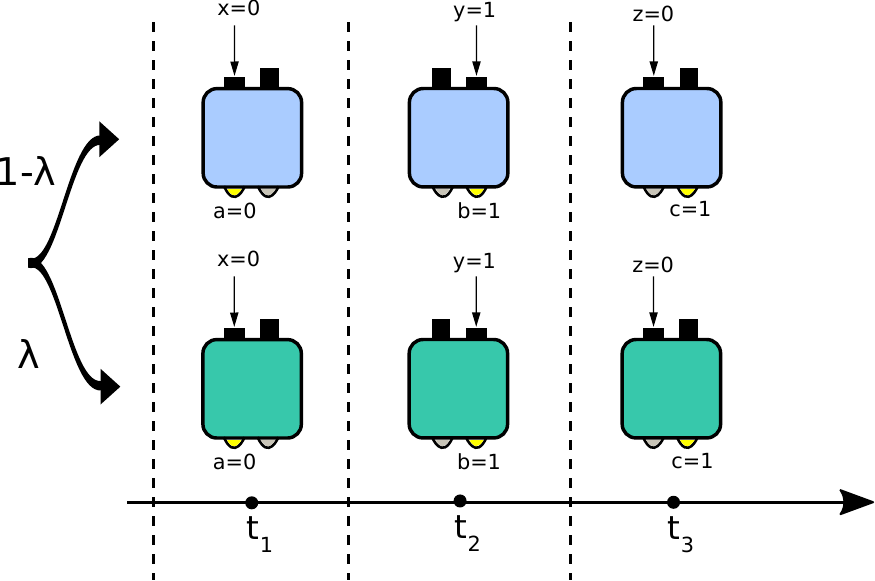}&
&\includegraphics[width=.48\textwidth]{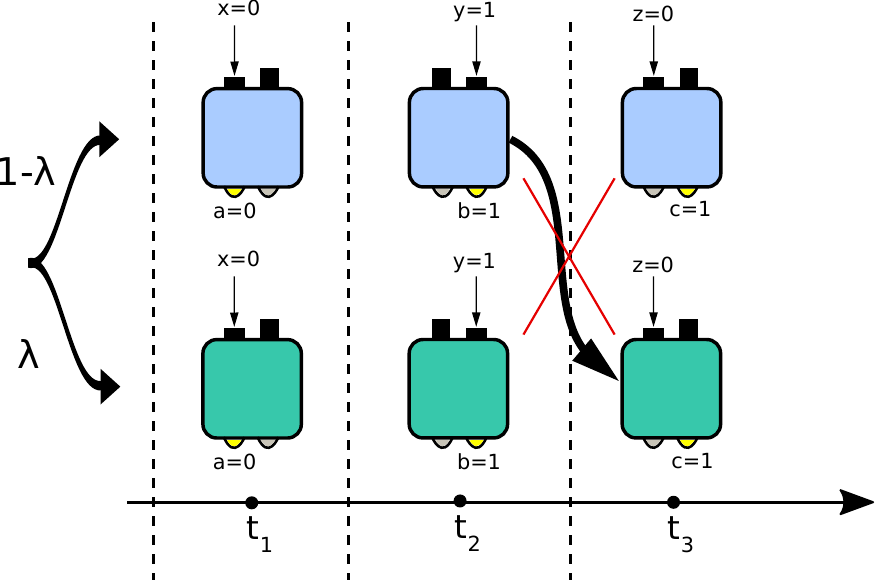}\\[1em]
\end{tabular}
\caption{\label{fig:1}%
Graphical representation of the possible situations. In each row, the three 
boxes represent the same physical device at different times/points in the 
sequence. (a) Main memory cost scenario. A single box with input sequence $xyz$ 
and output sequence $abc$. No external clock/memory is accessible to the box 
and hence its behavior is solely governed by its internal state. (b) 
Time-dependent operations. Additional time information is provided by a clock, 
which allows the box to perform time-dependent operations. This scenario is 
equivalent to the one used for discussing communication cost. (c) Allowed 
randomness. At the beginning of each run, the experimenter chooses with 
probability $\lambda$ the green box (bottom), or with probability $1-\lambda$ 
the blue box (top) and uses it to generate the whole sequence. (d) Forbidden 
randomness. It is not allowed to change the box at some point inside the 
sequence, since this would be a time-dependent operation.}
\end{figure}

An important element, in order to be able to speak about the memory cost of 
temporal correlations, is the requirement that all time-dependent information 
used to produce the outputs must be stored within the physical system used to 
implement the box. This implies that the physical operations performed to 
produce an output must be time-independent, e.g., the experimenter is not 
allowed to look at the wall clock and decide to implement in a different way 
the operation associated with a certain input $x$, as this will result in an 
additional source of memory, i.e., the clock keeping track of time. It is 
interesting to notice that the case where such time-dependent are admissible is 
equivalent to the case of quantum communication scenarios such as quantum 
random access codes or the scenario described by Brierley \emph{et 
al.} \cite{BrierleyPRL2015}. In fact, the latter scenario can be modeled as a 
network with ordered nodes, where a single physical system is transmitted 
through the nodes, and at each time step one of the nodes receives the system, 
performs a local operation, and transmit the system to the subsequent node. 
Since for each node it is known in advance in which part of the sequence it is 
situated, its local operations can be adapted to maximize a certain figure of 
merit defined in terms of probabilities of outcomes. This scenario covers the 
notion of ``communication cost'' and it must be distinguished from the notion 
of ``memory cost'' that is considered here. Moreover, even though in the memory 
cost scenario we are not allowed to change the operations throughout the 
sequence, it still makes sense to use classical randomness at the beginning of 
a sequence: at each experimental run, the experimenter can flip a coin and 
decide to perform the whole sequence of with one box or another. The resulting 
correlations will be a convex combination of the correlations obtained from 
either box. A graphical representation of the above ideas is presented in 
Fig.~\ref{fig:1}. These intuitive notions are made more rigorous in the next 
section.

\section{Finite-state machines}\label{sec:fin_st}

In this section, we formally define the classical, quantum, and GPT models for 
the box used in the previous section. In this model we assume that the box is 
implemented as a machine which acts on an internal state. Upon receiving an 
input $x$, the box operates on the internal state and produces the output $a$. 
The internal state is the specific model of the memory from the previous 
section. More precisely, we use the finite number of perfectly distinguishable 
states as a measure for the memory and for this reason we call this model a 
finite-state machine.

In a first step we need to describe the internal state $\omega$ and the 
operations $\II_{a|x}$ of the machine. We choose ordered vector spaces to 
describe the machine, which is an appropriate framework for a wide range of 
GPTs. In Appendix~\ref{app:gpt_def} we give a brief summary of this 
mathematical formalism. In brief, a GPT is then described by a real vector 
space $V$ with partial order ``$\le$'' and an order unit $e\in V$. In quantum 
theory $V$ would be the set of Hermitian operators, $A\le B$ would correspond 
to $B-A$ being positive semidefinite, and $e$ to the identity operator. 
Measurement outcomes are represented by effects $f\in V$ with $0\le f\le e$ and 
a measurement $\M_x$ is represented by a collection of effects $\M_x= 
(f_{a|x})_a$ with $\sum_a f_{a|x}= e$. The set of states $\ST$ is a subset of 
the dual space of $V$ such that the probability of outcome $a$ in the 
measurement $\M_x$ is given by $p(a|x)= \omega f_{a|x}$. Therefore $\omega e= 
1$ and $\omega f\ge 0$ for all $f\in V$ with $f\ge 0$. The operations 
$\II_{a|x}$ represent a specific way to implement a measurement, taking into 
account the change of the internal state $\omega$. More precisely, the linear 
map $\II_{a|x}\colon V\rightarrow V$ is such that $f_{a|x}= \II_{a|x}e$ is the 
effect describing the output $a$. In addition the positivity condition 
$\II_{a|x}f\ge 0$ for any $f\in V$ with $f\ge 0$ needs to be satisfied and 
further restrictions to $\II_{a|x}$ may apply depending on the specific GPT. If 
we group together the transformations $\II_x= (\II_{a|x})_{a\in \OA}$ for a 
fixed input $x$, then $\II_x$ is called an instrument. If we ignore the outcome 
$a$, then the instrument maps states to states, in the sense that $\omega 
\sum_a \II_{a|x}\in \ST$ for any state $\omega$.

Given the initial internal state $\omega$ of the finite-state machine and the 
instrument $\II_x= (\II_{a|x})_a$, the probabilities associated with a sequence 
of measurement are given by
\begin{equation}
p(a|x)= \omega \II_{a|x} e,\quad p(ab|xy)= \omega \II_{a|x} \II_{b|y}e, \quad 
\text{etc.}
\end{equation}
Note, that we write the transformations in the Heisenberg picture, so that the 
time ordering proceeds from the left to the right. For a general sequence of 
inputs $x_1x_2\cdots x_n= \vec x$ and outputs $a_1a_2\ldots a_n= \vec a$ we 
write
\begin{equation}\label{eq:def_gpt_prob}
p(\vec a|\vec x)\equiv p(a_1 \cdots a_n|x_1 \cdots x_n)
 = \omega\II_{a_1|x_1} \cdots \II_{a_n|x_n} e
 \equiv \omega \II_{\vec{a}|\vec{x}} e,
\end{equation}
We exemplify in the next sections how this expression is specialized to the 
classical and quantum case.

As we discussed previously, we exclude any external source of memory, such as a 
clock keeping track of time. This is formalized by the fact that all 
instruments solely depend on the input and in particular by the fact that all 
transformations are time-independent. In general, for a fixed GPT this 
requirement makes the set of achievable correlations nonconvex. Nevertheless, 
we can recover convexity by allowing the use of convex mixtures as follows. 
Before starting the experiment we use a random variable $\lambda$, distributed 
according to some probability distribution $q(\lambda)$, to decide which 
finite-state machine to use subsequently. Since the machine is characterized by 
the initial state $\omega_\lambda$ and the instruments $\II^\lambda_x$, this 
yields the correlations
\begin{equation}\label{eq:conv_lambda}
p(\vec{a}|\vec{x})= \sum_\lambda q(\lambda) \omega_\lambda
\II_{\vec{a}|\vec{x}}^\lambda e.
\end{equation}
The above procedure allows us to generate all correlations from the convex hull 
of correlations obtainable from a family of finite-state machines parametrized 
by $\lambda$.

Finally, we define the memory of the system using the GPT notion of capacity 
(cf. Ref.~\cite{MasanesNJP2011}), i.e., the size of the maximal set of 
perfectly distinguishable states. More precisely, we say that a GPT defines a 
$d$-state machine if $d$ is the maximal integer such that there exists a 
collection of $d$ states $(\omega_k)_k$ and $d$ effects $(f_k)_k$ such that
\begin{equation}\label{eq:def_dim}
\sum_k f_k \le e\text{ and } \omega_i f_j= \delta_{ij} \text{ for all }
 i,j.
\end{equation}
Namely, all effects are part of the same measurement, which is able to 
perfectly (i.e., probability one) discriminate among the states. This notion of 
capacity corresponds to the dimension of the Hilbert space in quantum mechanics 
and with the number of extremal points of the state simplex in classical 
probability theory (see, e.g., Ref.~\cite{MasanesNJP2011}).

It is instructive to discuss in more detail the classical and quantum case, 
which may be more familiar to the reader. We subsequently introduce a 
particular class of capacity-2 GPTs, the dichotomic norm cones 
\cite{KleinmannJPA2014}.

\subsection{Classical finite-state machines}\label{sec:cpa}

A classical finite-state machine \cite{PazBook2003} is described by its 
internal rules for state transitions and output probabilities. Given the 
classical state $\CC= \set{1,2,\dotsc, d}$, the observed probability 
distribution $p(\vec a|\vec x)$ for an input sequence $\vec x$ of length $n$ 
can be written as
\begin{equation}\label{eq:cmach}
 p(\vec a | \vec x)=
 \sum_{s_0,\ldots,s_n\in \CC}
 r(s_0)q(a_1,s_1|s_0,x_1) \cdots q(a_n,s_n|s_{n-1},x_n).
\end{equation}
Here, $r(s_0)$ describes the probability of preparing the initial state 
$s_0$ of the machine\footnote{Without loss of generality, we could assume a 
fixed pure initial state $s_0$, since we allow for convex mixtures of 
different machines. Nevertheless, we keep the notation with an initial 
distribution $r(s_0)$ over all pure states $\mathcal{C}$, i.e., a mixed state, 
to keep the analogy with the standard notation for GPT states ($\omega$) and 
quantum states ($\rho$).} and $q(a,s'|s,x)$ describes the 
probability that the machine yields the output $a$ and transition to 
the state $s'$, given that the internal state is $s$ and the input is $x$.  
As in Eq.~\eqref{eq:conv_lambda}, those machines can depend on a random variable $\lambda$
generated at the beginning of each sequence, i.e.,
\begin{equation}\label{eq:cmach_conv}
 p(\vec a | \vec x)=
 \sum_{s_0,\ldots,s_n\in \CC, \lambda} p(\lambda) r^\lambda(s_0)
 q^\lambda(a_1,s_1|s_0,x_1) \cdots q^\lambda (a_n,s_n|s_{n-1},x_n).
\end{equation}

For clarity reasons, we use only Eq.~\eqref{eq:cmach} in the following. The 
correlations $p(\vec a|\vec x)$ can be rewritten as
\begin{equation}
 p(\vec{a}|\vec{x})= \vtr\pi^\dag T(a_1|x_1)\cdots T(a_n|x_n)\vtr\eta
 \equiv \vtr\pi^\dag T(\vec{a}|\vec{x})\vtr\eta,
\end{equation}
where $\vtr\eta= (1,1,\ldots,1)^\dag$ is the $d$-dimensional vector of ones, 
$\vtr\pi$ is the vector representing the initial state, and $T(a|x)$ is the 
$d\times d$ transition matrix. Hence, $\pi_s= r(s)$ and 
$[T(a|x)]_{s,s'}= q(a,s'|s,x)$. The rules for probabilities that constrain 
$q(a,s'|s,x)$ translate to $[T(a|x)]_{s,s'}\ge 0$ for all $s,s',a,x$, and 
$\sum_a [T(a|x)\vtr\eta]_s=1$ for all $s,x$.

Translating the above in the languages of GPTs, we let $V= \reals^d$ and set 
the order unit $e$ to $\vtr \eta$. The partial order is such that $\vtr v\le 
\vtr w$ if $v_s\le w_s$ for all $s$. Then the set of states is given by by the 
canonical $(d-1)$-dimensional simplex,
\begin{equation}
 \ST= \set{ \vtr \varpi \in \reals^d | \vtr\varpi \ge 0 \text{ and } 
 \vtr\varpi^\dag \vtr\eta=1}.
\end{equation}
In particular $\vtr\pi$ is a state. Analogously, the transition matrix $T(a|x)$ 
corresponds to the instruments $\II_{a|x}$, whereas the effects can be obtained 
as $f_{a|x}:= T(a|x)\vtr\eta$. It can be easily seen that $d$ correspond 
exactly to the capacity defined according to Eq.~\eqref{eq:def_dim}.

\subsubsection{Classical finite-state machines and Leggett-Garg's macrorealist models}
It is interesting at this point to briefly compare the model in Eq.~\eqref{eq:cmach_conv} with the macrorealist model of Leggett and Garg~\cite{LeggettPRL1985}. A macrorealist model can be simply obtained by reducing the set of possible internal states to a single one, i.e., $d=1$, and re-introducing the time-dependence of operations.
\begin{equation}\label{eq:MR}
 p(\vec a | \vec x)=
 \sum_{\lambda} p(\lambda) q^\lambda_{t_1}(a_1|x_1) \cdots q^\lambda_{t_n} (a_n|x_n),
\end{equation}
where the dependency on $s_0,\ldots,s_n$ becomes trivial and is then removed. We recall that macrorealist models are based on two assumptions: {\it macrorealism per se}, i.e., the existence of a classical probability, and {\it noninvasive measurability}, i.e., the assumption that the measurement has no effect on the subsequent evolution of the system. The finite-state machine model can be seen as arising from the macrorealist model via a relaxation of the assumption of a noninvasive measurement: the measurement can be invasive up to a certain amount quantified by the internal memory of the system, e.g., for a two state-machine the measurement can encode at most one bit of information in the system. Notice that, however, usually Leggett-Garg assumptions allow the operations to be time-dependent. 

It is interesting to remark that similar ideas have been already employed in Leggett-Garg tests to tighten the clumsiness loophole. Under the assumption of a classical model with two internal states, Knee {\it et al.} \cite{Knee2016} were able to quantify the measurement invasivity via a control experiment, and consequently modify the classical bound for the Leggett-Garg inequality. In agreement with our argument above, the work of Knee {\it et al.} shows how the notion of finite memory can be used as a relaxation of the assumption of a noninvasive measurement.

\subsection{Quantum finite-state machines}

The quantum case is perhaps the most familiar to readers from quantum 
information. The probability distribution is obtained by sequences of 
generalized measurements $\M_x= (E_{a|x})_a$ on a single system described by a 
Hilbert space of fixed dimension $d$. The outcomes of the measurement are 
described by positive semidefinite operators $E_{a|x}\ge 0$ with $\sum_a 
E_{a|x}= \openone$.

In order to discuss sequential measurements, however, we need to know the 
post-measurement state, or, better, the transformation induced by the 
measurements. This information is provided by a quantum instrument $\II_x$, 
defined as a collection of completely positive maps $\II_x= (\II_{a|x})_a$, 
from the space of linear operators into itself, that sum up to a unital map, 
i.e., $\sum_a \II_{a|x}(\openone)= \openone$, corresponding to the rule of 
preservation of probability in the Heisenberg picture, see, e.g., 
\cite{HeinosaariZiman2011}. Each instrument defines a generalized measurement 
through the formula $E_{a|x}= \II_{a|x}(\openone)$. Similarly to the previous 
cases, we can shorten the notation by defining $\II_{\vec{a}|\vec{x}} := 
\II_{a_1|x_1} \circ \ldots \circ \II_{a_n|x_n},$ where $\circ$ denotes the 
composition of maps and write
\begin{equation}\label{eq:def_qpa_prob}
p(\vec{a}|\vec{x})= \tr[\rho\,\II_{\vec{a}|\vec{x}}(\openone) ].
\end{equation}

As mentioned before, quantum theory is a particular case of a GPT, where the 
vectors space $V$ is the set of Hermitian operators, the partial order is 
defined through positive semidefiniteness and the order unit $e$ is given by 
$\openone$. The set of states is given by the density operators, identified by 
the Hilbert--Schmidt inner product with the elements of the dual space of $V$,
\begin{equation}
 \ST= \set{ X \mapsto \tr(\rho\,X) | \rho\ge 0 \text{ and } \tr(\rho)=1 }.
\end{equation}
Hence Eq.~\eqref{eq:def_qpa_prob} and Eq.~\eqref{eq:def_gpt_prob} are 
equivalent. It is then clear that the capacity of the system, defined as the 
number of perfectly distinguishable state \cite{FritzNJP2010, Hoffmann2018} 
precisely corresponds to the dimension of the Hilbert space. It is important to 
remark that we need to consider the general formalism of quantum instruments, 
since if the measurement devices would merely act projectively, there would be 
nontrivial limitations on the achievable correlations that are valid for 
arbitrary dimensions \cite{BudroniPRL2013, BudroniPRL2014}.

\subsection{GPT two-state machines}\label{sec:gpa}

We already provided a definition of GPT finite-state machines at the beginning 
of Sec.~\ref{sec:fin_st}. In this section, we specialize this definition by 
considering a class GPTs where the effects belong to a dichotomic norm cone. 
These theories are a generalization of the classical bit (cbit) and quantum bit 
(qubit), in the sense that they have capacity two, i.e., they allow for a set 
of perfectly distinguishable states, in the sense of Eq.~\eqref{eq:def_dim}, of 
at most size two. We then specialize our discussion to the case of hyperbits 
(hbits) \cite{PawlowskiPRA2012} and generalized bits (gbits) 
\cite{BarrettPRA2007}. The former are a generalization of the Bloch sphere to 
dimension higher than three, whereas the latter are the local part of a 
Popescu--Rohrlich box \cite{PopescuFPH1994}. We also provide a more detailed 
discussion of GPTs in Appendix~\ref{app:gpt_def}.

Consider the vector space $V:= \reals\times \reals^n$, and the partial order 
where $(t,\vtr x)\ge 0$ if $t\ge \norm{\vtr x}$. Here, $\norm{\vtr x}$ is any 
norm in $\reals^n$. We define the order unit $e:= (1, \vtr 0)$. This implies 
that effects are vectors $f= (t,\vtr x)$ such that $\norm{x}\le \min\set{t, 
1-t}$. The states for a dichotomic norm cone are the maps $\omega\colon (t,\vtr 
x)\mapsto t+\vtr w^\dag \vtr x$ with the condition $\norm{\vtr w}_*\le 1$, 
where ${\norm{\vtr w}_* := \sup\set{\vtr w^\dag \vtr y | \norm{\vtr y}\le 1}}$ 
is the dual norm of $\norm{\,\cdot\,}$. A peculiarity of this GPT is that it 
has exactly capacity two, independent of $n$ or the choice of the norm 
$\norm{\,\cdot\,}$. We provide a proof of this fact in Appendix~\ref{app:dim}.

Depending on the norm chosen and on $n$ we have different GPTs. If we take 
$\norm{\vtr x}$ to be the Euclidean (or $\ell_2$) norm, i.e., $\norm{\vtr x}^2= 
\sum_i x_i^2$, we obtain hbits, and specifically cbits for $n=1$, qubits for 
$n=3$ and more general hbits for $n>3$. If we take $n=2$ and the Manhattan (or 
$\ell_1$) norm, i.e., $\norm{\vtr x}= \sum_i |x_i|$, we obtain a gbit. For the 
case of the Euclidean norm, the dual norm is also the Euclidean norm itself, 
whereas the dual of the Manhattan norm is the supremum (or $\ell_\infty$) norm, 
i.e., $\norm{\vtr w}_*= \max_i |w_i|$.

\section{Bounds on temporal correlations}\label{sec:temp_b}

In this section, we consider the simplest nontrivial scenario, a sequence of 
two measurements, with inputs $x,y$ and outputs $a,b$, with $a,b,x,y=0,1$. We 
are interested in bounds on the  sum of correlations
\begin{equation}
\soc= p(01|00) + p(10|10) + p(10|11).
\end{equation}
Similar expressions have been considered in Ref.~\cite{HoffmannThesis2016, 
Hoffmann2018, Spee2018}. Clearly, the trivial bound $\soc\le 3$ holds. For 
hbits the value $\soc=3$ cannot be reached and therefore there must exist a 
nontrivial bound $\soc\le \Omega_{\mathrm{hbit},n}$ for any dimension $n$ of 
the hbit, in particular for the cbit ($n=1$) and the qubit ($n=3$). A simple 
analytical proof of $\Omega_{\mathrm{hbit},n}<3$ is presented in 
Appendix~\ref{app:bound}.

\subsection{Measure-and-prepare strategies}

The analysis of the case of sequences of length two can be greatly simplified 
using measure-and-prepare instruments. These are instruments of the form 
$\TT_x= (f_{a|x}\sigma_{a|x})_a$, where $\M_x= (f_{a|x})_a$ is a measurement 
and $(\sigma_{a|x})_a$ is a collection of states. Hence $\TT_x$ can be 
implemented by first measuring $\M_x$ and then, depending on the outcome $a$, 
preparing the state $\sigma_{a|x}$.

Now, for a sequence of length two, the correlations are given by
\begin{equation}
 p(ab|xy)= \sum_\lambda p(\lambda)\omega^\lambda \II^\lambda_{a|x}f_{b|y}^\lambda,
\end{equation}
where $\omega^\lambda$ is given by the initialization procedure of the 
individual finite-state machines participating in the mixture of machines. 
Clearly, the extremal values $\soc$ can be achieved by a single finite-state 
machine and hence in the following we will omit the index $\lambda$ and the 
summation of $\lambda$.

The instruments $\II_x$ can be replaced by measure-and-prepare instruments, by 
letting $f_{a|x}= \II_{a|x}e$ and $\sigma_{a|x}= \omega 
\II_{a|x}/\omega(f_{a|x})$ if the denominator is nonzero, or $\sigma_{a|x}= 
\omega$. Then $p(ab|xy)= \omega f_{a|x}\sigma_{a|x} f_{b|y}$. Hence we can 
equivalently replace $\II_{a|x}$ by the prepare-and-measure strategy $\mathcal 
T_{a|x}= f_{a|x}\sigma_{a|x}$. Using this simplification, we obtain
\begin{equation}\label{eq:Smp}\begin{split}
 \soc &= p(0|0) + p(1|1) + p(10|10) - p(00|00) - p(11|11)\\
      &= p(0|0)[1- p(0|0;00)] + p(1|1)[1+p(0|1;10)-p(1|1;11)]\\
      &= \omega(f_{0|0}) [1- \sigma_{0|0}(f_{0|0})]
      + \omega(f_{1|1}) [1+ \sigma_{1|1}(f_{0|0}-f_{1|1})],
\end{split}
\end{equation}
where we used the notation $p(b|a;xy)$ for the probabilities conditioned on 
previous outputs.

\subsection{Analytical and numerical bounds}

Since $\soc=3$ cannot be reached with hbits, there must be a finite gap between 
the actual bound for cbits, qubits, and hbits with a Bloch sphere of fixed 
dimension. In fact, the sets of states and effects are compact, and the 
expression $\soc$ can be written as a continuous function from the set of 
states and effects into the interval $[0,3]$, so its image must be compact. In 
this section, we explore in more detail the bounds for cbits, qubits, and hbits 
via numerical methods.

\subsubsection{Classical bit}

For the cbit case, we  use the representation from Sec.~\ref{sec:cpa}, 
specifically, $\omega$ is represented by $(1,0)$, $\sigma_{i|i}$ by 
$(s_i,1-s_i)$, and $f_{i|i}$ by $(a_i,b_i)^\dag$, where $s_i, a_i, b_i\in 
[0,1]$. Then Eq.~\eqref{eq:Smp} reads
\begin{equation}
 \soc= a_0[1-s_0a_0-(1-s_0)b_0]+a_1[1+s_1(a_0-a_1)+(1-s_1)(b_0-b_1)].
\end{equation}
Only $a_0$ and $a_1$ appear nonlinearly in this expression. Therefore, the 
maximum of $\soc$ is attained when all remaining parameters are either $0$ or 
$1$. This leaves us with a two-dimensional, at most quadratic optimization, 
which can be performed at once. For the maximal value $\Omega_\mathrm{cbit}$ 
of $\soc$ using classical bits we then obtain
\begin{equation}
 \Omega_\mathrm{cbit}= \frac94.
\end{equation}
This maximum occurs at a unique point, where $s_1= b_1=0$, $b_0= s_0= a_1=1$, 
and $a_0= \frac12$. Hence, an optimal machine is given by the initial state 
$\vtr \pi^\dag= (1,0)$ and the transition matrices
\begin{equation}
T(0|0)= \begin{pmatrix}
 \frac12 & 0 \\
 1 & 0 \\
\end{pmatrix},\quad
T(1|0)= \begin{pmatrix}
 \frac12 & 0 \\
 0 & 0 \\
\end{pmatrix},\quad
T(1|1)= \begin{pmatrix}
 0 & 1 \\
 0 & 0 \\
\end{pmatrix}= T(0|1)^\dag.
\end{equation}
Note, that while the solution for the chosen parametrization is unique, the 
transition matrices are not unique.

\subsubsection{Quantum bit}

For the qubit case, we can proceed similarly to Ref.~\cite{Hoffmann2018}. First 
we note that in Eq.~\eqref{eq:Smp}, the initial state $\omega$ can be replaced 
by a pure state, so that $\omega\colon X\mapsto \braket{0|X|0}$. The expression 
$\soc$ can then be written as
\begin{equation}
\soc= \braket{0| E_{0|0} | 0} \left[1 - \tr[\sigma_{0} E_{0|0} ] \right ]+ 
\braket{0| E_{1|1} | 0} \left[1 + \tr[\sigma_{1}(E_{0|0}- E_{1|1}) ] \right ], 
\end{equation}
where $0\le E_{i|i}\le \openone$ are effects and $\sigma_0$ and $\sigma_1$ are 
density operators. Since the latter occur only linearly in $\soc$, we can 
substitute them with pure states $\ket{\psi_0}$ and $\ket{\psi_1}$, 
respectively. The maximum of $\soc$ for qubits is hence given by
\begin{equation}\label{eq:b3qubit}
 \Omega_\mathrm{qubit}=
 \max_{\stackrel{\ket{\psi_0}, \ket{\psi_1}}{E_{0|0}, E_{1|1}}} \left[
  \braket{0| E_{0|0} | 0} \left(1-\braket{\psi_{0}| E_{0|0}| \psi_{0}}\right)+
  \braket{0| E_{1|1} | 0} \left(1+\braket{\psi_{1}|E_{0|0}-E_{1|1}|\psi_{1}} 
\right) \right],
\end{equation}

By parametrizing $E_{0|0}$, $E_{1|1}$, $\ket{\psi_{0}},\ket{\psi_{1}}$ with 
real parameters, one can write the expression in Eq.~\eqref{eq:b3qubit} as 
fourth degree polynomial. This can be further simplified, by taking $E_{0|0}$ 
$E_{1|1}$, $\ket{\psi_{0}},\ket{\psi_{1}}$ as real expression, which lowers the 
number of parameters to ten.\footnote{Since the upper bound is calculated by 
polynomial optimization methods, it is more convenient to keep the expression 
and constraints in polynomial form, rather than minimizing the number of 
variables. For example, a parametrization of a pure state as ${\cos \theta 
\ket{0} + \sin \theta \ket{1}}$ removes one variable and one constraint, but it 
is no long a polynomial in the parameters.} The reduction to the real part of a 
qubit does not affect the optimality as we show in the next section, see 
Eq.~\eqref{eq:b3_simpl}.

It is always possible to obtain a lower bound 
$\Omega^\mathrm{feas}_\mathrm{qubit}$ on $\Omega_\mathrm{qubit}$ by guessing 
appropriate values for the free parameters. An upper bound, 
$\Omega^\mathrm{Lass}_\mathrm{qubit}$, can be obtained via Lasserre's method 
\cite{Lasserre2001} of polynomial optimization based on moment matrices and 
semidefinite programming (SDP) \cite{SDP_review}, which provides analytical 
upper bounds up to the numerical precision. That is,
\begin{equation}
\Omega_{\rm qubit}^{\rm feas} \le \Omega_{\rm qubit} \le \Omega_{\rm 
qubit}^{\rm Lass}.
\end{equation}
With the simplifications used above, the upper and lower bounds coincide up to 
the numerical precision of $10^{-5}$. We have,
\begin{equation}\label{eq:qubit_bound}
\Omega_{\rm qubit}^{\rm feas} \approx \Omega_{\rm qubit}^{\rm Lass} \approx 
2.35570,
\end{equation}
showing a gap between the cbit and qubit case. A feasible solution is given by 
the post-measurement states and effects,
\begin{equation}
 \ket{\psi_{0}}\approx 0.408\ket0-0.913\ket1,
 \quad
 \ket{\psi_{1}}\approx 0.640\ket0+0.768\ket1,
\end{equation}
and the effects
\begin{equation}
 E_{0|0}= \openone-\ketbra{\psi_{0}}, \text{ and }
 E_{1|1}= \ketbra\phi, \text{ where }
 \ket\phi\approx 0.971\ket0-0.238\ket1.
\end{equation}

\subsubsection{Hyperbit}

For the case of hbits, and also the more general dichotomic norm cones, we use 
the parametrization $\omega\colon (t,\vtr x)\mapsto t+\vtr w^\dag \vtr x$ and 
$\sigma_{i|i}\colon (t,\vtr x)\mapsto t+\vtr w_i^\dag \vtr x$ for the states 
and $f_{i|i}= (t_{i},\vtr{f}_{i})$ for the effects. Then Eq.~\eqref{eq:Smp} 
reads
\begin{equation}
 \soc=
   (t_0+\vtr w^\dag\vtr f_0) [ 1-t_0-\vtr w_0^\dag \vtr f_0 ]
 + (t_1+\vtr w^\dag\vtr f_1) [ 1+t_0-t_1 + \vtr w_1^\dag (\vtr f_0-\vtr f_1 )].
\end{equation}
When maximizing $\soc$, we can eliminate the maximization over $\vtr{w}_0$ and 
$\vtr{w}_1$, by choosing appropriate vectors with $\norm{\vtr{w}_i}_*=1$ such 
that $\vtr w_0^\dag \vtr f_0= \norm{\vtr f_0}$ and $\vtr w_1^\dag(\vtr f_0-\vtr 
f_1)= \norm{\vtr f_0- \vtr f_1}$. The maximal value of $\soc$ for a given 
dichotomic norm cone is hence
\begin{equation}\label{eq:b3_simpl}
 \Omega_\mathrm{dnc}= \max_{\substack{\vtr w, t_0, \vtr f_0\\ t_1, \vtr f_1}}
 \left\{
   (t_0 + \vtr w^\dag \vtr f_0) [ 1-t_0+\norm{\vtr{f}_0}\, ]
 + (t_1 + \vtr w^\dag \vtr f_1) [ 1+t_0-t_1+\norm{\vtr{f}_0 - \vtr{f}_1}\,]
 \right\},
\end{equation}
where the constraints of the optimization are $\norm{\vtr w}_*\leq 1$ and $\norm{\vtr f_i} \leq \min\{t_i, 1-t_i\}$. 
For the case of hbits, both $\norm{\cdot }_*$ and $\norm{ \cdot}$ correspond to the $\ell_2$ norm , hence the conditions are invariant  under 
orthogonal transformations as it is the case for the function to be maximized, which depends only on the norm of $\vtr f_i$ and the scalar products between $\vtr w$ and $\vtr f_i$. Since the only contribution for $\vtr w$ comes from 
the component in the span of $\vtr{f}_0, \vtr{f}_1$, the problem reduces to a 
two-dimensional one. 
This is equivalent to the qubit case with the Bloch ball 
restricted to the $xz$-plane, both for states and effects. This implies that 
the bound for hbits coincide with the bound for qubits. We thus have
\begin{equation}
\Omega_{\rm hbit} \approx 2.35570,
\end{equation}
as in Eq.~\eqref{eq:qubit_bound}.

\subsubsection{Generalized bit}

The case of gbits differs from the previous one because we can actually reach 
$\soc=3$ already for a two-state machine, namely the dichotomic norm cone with 
$n=2$ and the $\ell_1$ norm. This model corresponds to the local part of a 
Popescu--Rohrlich box \cite{PopescuFPH1994,BarrettPRA2007}. The space of 
effects is a polytope with extremal effects given by the extremal point of the 
two-dimensional $\ell_1$ norm, i.e., $a_{\pm i}= \frac12 (1,\pm \vtr e_i)$, 
with $\vtr e_i$ the canonical vectors in $\reals^2$. Then, the states are the 
$\omega= (1,\vtr{w})$ with $\vtr{w}$ in the square $[-1,1]\times [-1,1]$, i.e., 
the unit ball with respect to the $\ell_\infty$ norm. The choices
\begin{equation}
 \vtr w^\dag= (1,-1),\quad
 \vtr w_0^\dag= (-1,1),\quad
 \vtr w_1^\dag= (1,1)
\end{equation}
and
\begin{equation}
 \vtr f_0= \vtr e_1,\quad \vtr f_1= -\vtr e_2
\end{equation}
yield, according to Eq.~\eqref{eq:b3_simpl}, the algebraic maximum for $\soc$, 
i.e., $\soc=3$. We thus have
\begin{equation}
\Omega_{\rm gbit}=3
\end{equation}
for gbits and hence also for the set of all dichotomic norm cones with the same norm and arbitrary $n$.

\subsection{Impossibility of simulating contextual correlations with general 
instruments on a qubit}

In this section, we investigate whether qubit machines are able to simulate 
some contextual correlations that arise in higher dimensional quantum systems. 
In Ref.~\cite{KleinmannNJP2011} it was proved that in order to simulate all 
deterministic predictions associated with the observables of the
Peres--Mermin square \cite{Peres1990,Mermin1990}, a
classical machine with at least $4$ states is necessary. This result was obtained in 
the framework of tests of contextuality involving sequential measurements 
\cite{Guehne2010}, in which the relevant compatibility notion is given by the 
nondisturbance among compatible measurements and repeatability of outcomes, 
e.g., if $\M_x$ and $\M_y$ are compatible measurements in the measurement 
sequence $\M_x\M_y\M_x$, the outcome for the first measurement of $\M_x$ will 
be repeated in the second measurement of $\M_x$.

We derive here a related result by showing that even a qubit is not sufficient 
to exhibit contextual correlations. For this we use a rather broad notion of 
contextuality. Consider a box with inputs from an alphabet $\IA$ and outputs 
from an alphabet $\OA$ as before. The input sequences are restricted such that 
a sequence $\vec x$ is admissible if and only if all inputs are from the same 
context $\CA\subset\IA$, i.e., $\set{x_i|i}\subset \CA$. A context $\CA$ is a 
set of inputs, such that $p(\vec a|\vec x)= p[\pi(\vec a)|\pi(\vec x)]$ for any 
inputs sequence $\vec x$ from $\CA$, any output sequence $\vec a$, and any 
permutation $\pi$. In addition we assume that any input is repeatable, i.e., $ 
p(\vec ab|\vec xx_i)= p(\vec a|\vec x)\delta_{b,a_i}$ for any position $i$ in 
any admissible sequence.

Such a box is noncontextual, if all correlations of the box (using only 
admissible input sequences) can be reproduced by a box without memory, i.e., by 
a noncontextual model. We claim that any such box implemented on a qubit is 
noncontextual.

We start the proof of this statement by determining those inputs, which cannot 
require the use of memory. First, if an input $z$  ever produces only  the output $c$, within all 
admissible input sequences, then we can eliminate this 
input from our considerations. This is the case, because in any sequence we can 
permute $z$ to the end of the sequence. Then
\begin{equation}
 p(\vec ac|\vec xz)= p(\vec a|\vec x)p(c|\vec a;\vec xz)= p(\vec a|\vec x),
\end{equation}
where the first equality is due to Eq.~\eqref{eq:decomp} and the second due to 
the assumption that only  the output $c$ ever occurs. Second, assume that for a 
certain input $z$, whenever it occurs in an admissible sequence, the internal 
state of the machine before the input $z$ is only ever the state $\rho$. Again 
we can eliminate this input from our considerations, because the output for $z$ 
and the state after the output can be determined without considering the state. 
Third, we can ignore the pathological cases of inputs, which are not member of 
any context. In the following we assume without loss of generality, that the 
box does not have any input falling under the those three cases just discussed.

Next, we show that for any input $z$ the instrument $(\II_{c|z})_c$ must be a 
measure-and-prepare instrument of the form
\begin{equation}\label{eq:qubit_mnp}
 \II_{c|z}\colon X\mapsto \ketbra{\psi_{c,z}} X\ketbra{\psi_{c,z}} \text{ with 
 }
 \braket{\psi_{c,z}|\psi_{c,z}}\in \set{0,1}.
\end{equation}
This can be seen as follows. According to the assumptions, there are two input 
sequences $\vec xz$ and $\vec yz$ and corresponding output sequences $\vec ac$ 
and $\vec bc$, so that the state before the input $z$ is $\rho$ and $\rho'$, 
respectively, with $\rho\ne \rho'$. Using Eq.~\eqref{eq:decomp} and 
Eq.~\eqref{eq:def_qpa_prob} we have
\begin{align}
 p(\vec ac|\vec xz)\delta_{c,c'}=
 p(\vec acc'|\vec xzz)&= p(\vec a|\vec x)p(cc'|\vec a;\vec xzz)
   = p(\vec a|\vec x) \tr[\rho\phantom{'}\,\II_{c|z}\II_{c'|z}\openone]
 \text{ and}\\
 p(\vec bc|\vec yz)\delta_{c,c'}=
 p(\vec bcc'|\vec yzz)&= p(\vec b|\vec y)p(cc'|\vec b;\vec yzz)
   = p(\vec b|\vec y) \tr[\rho'\,\II_{c|z}\II_{c'|z}\openone],
\end{align}
where $p(\vec a|\vec x)>0$ and $p(\vec b|\vec y)>0$. Therefore for $c\ne c'$,
\begin{equation}
 \tr[\bar\rho\,\II_{c|z}\II_{c'|z}\openone]= 0
\end{equation}
with $\bar\rho= (\rho+\rho')/2$. Since $\rho\ne \rho'$ and we assume a qubit 
system, the mixture $\bar\rho$ has necessarily rank two, i.e., $\bar\rho \ge 
\epsilon\openone$ for some $\epsilon>0$. We arrive at the condition
\begin{equation}
 \sum_{i,j} \tr[K_i Q_j Q_j^\dag K_i^\dag]=0,
\end{equation}
where $K_i$ and $Q_j$ are the Kraus operators associated, respectively, with 
the instruments $\II_{c'|z}$ and $\II_{c|z}$, e.g., $\II_{c'|z}X= \sum_j 
K_j^\dag X K_j$. Then $K_i Q_j=0$ for all $i,j$. Similarly, exchanging $c$ with 
$c'$, we obtain $ Q_j K_i=0 $ for all $i,j$. This implies that $K_i$ and $Q_j$ 
are of rank one and that $K_i$ is proportional to $K_{i'}$ as well as $Q_j$ 
being proportional to $Q_{j'}$, for all $i,i'$ and $j,j'$. Hence we can omit 
the indices $i,j$ and consider simply $K$ and $Q$. Note that from $\sum_c 
\II_{c|z}\openone= \openone$, the condition $Q^\dag Q\le \openone$ follows 
which allows us to write $Q=\KetBra\alpha\beta$ with $\braket{\alpha|\alpha}=1$ 
and $\braket{\beta|\beta}\le 1$. Now, for $c=c'$ we obtain
\begin{equation}
\tr(\bar\rho \II_{c|z}\II_{c|z}\openone)=
\tr(\bar\rho \II_{c|z}\openone),
\end{equation}
which implies $(Q^\dag)^2Q^2= Q^\dag Q$. It follows that either $\ket\beta =0$ 
or $\ket\alpha$ and $\ket\beta$ are equal up to a phase and hence $\II_{c|z}$ 
is as stated in Eq.~\eqref{eq:qubit_mnp}.

As final step we need to show that there is no contextuality for projective 
qubit instruments. Given an admissible input sequence $\vec xyz$, and an output 
sequence $\vec abc$ such that $p(\vec ab|\vec xy)>0$, we have
\begin{equation}
 p(\vec abc|\vec xyz)
 = p(\vec ab|\vec xy) \abs{\braket{\psi_{b,y}|\psi_{c,z}}}^2
 \text{ and } p(\vec abcb|\vec xyzy)
 = p(\vec ab|\vec xy) \abs{\braket{\psi_{b,y}|\psi_{c,z}}}^4.
\end{equation}
The left hand side of both expressions has to be equal, yielding 
$\abs{\braket{\psi_{b,y}|\psi_{c,z}}}\in \set{0,1}$.

Consequently, any two inputs within a context are realized by the same 
projective instrument, except for some relabeling of the outcomes. We choose a 
specific measurement within one context, say $y$, so that $\II_{a|x}=\sum_b 
\II_{b|y}f^b(a|x)$ with some coefficients $f^b(a|x)\in \set{0,1}$. This way we 
can write for any correlations of this context
\begin{equation}
 p(\vec a|\vec x)=\sum_b p(b|y)\prod_i f^b(a_i|x_i),
\end{equation}
which is exactly the formula for a one-state machine, i.e., a noncontextual 
model.

This concludes the proof of our statement, due to the following
observation. If two contexts share an observable, then our argument
already applies and the union of both contexts must admit a
noncontextual model and hence the union of both contexts is again a
context. Eventually, we can join contexts until all contexts are
mutually disjoint. For each disjoint set we can construct a
noncontextual model, and since there are no admissible sequence
involving two different contexts, we have constructed a
noncontextual model for all admissible input sequences.

\section{Conclusions and outlook}

We introduced the notion memory cost of simulating temporal correlations based 
on the notion of finite-state machine, i.e., a physical system accepting an 
input at each time instant and generating an outcome and an internal state 
transition according to probabilistic rules. We investigated the correlations 
obtainable via such finite-state machines operating according to different 
probability theories, i.e., classical, quantum, or GPT. Our framework allow us 
to derive inequalities able to discriminate among different theories for the 
simplest nontrivial case, i.e., two-state machines, two inputs, two outputs, 
and sequences of length two. Moreover, we investigated, from the perspective of 
quantum finite-state machines, the possibility of simulating contextual 
correlations with a qubit and answered this question in the negative.

Our framework provides a notion of nonclassicality for single systems, which is 
based solely on observed correlations and does not make any assumption of the 
type of measurements involved, e.g., compatibility or noninvasiveness. We 
believe that several problems in quantum foundations and quantum information 
could be studied in this framework. For instance, a notion of nonclassicality 
for single systems, i.e., quantum contextuality, has recently been suggested as 
a resource for quantum computation. On the other hand, memory has been 
identified as a resource needed to simulate contextual correlations classically 
\cite{KleinmannNJP2011, Fagundes2017}. In addition, a different notion of 
contextuality for sequential operations has been defined and connected to 
speed-up in quantum computation \cite{Mansfield2018}. Our work could provide a 
general framework to discuss such different results and understand better the 
connection between memory cost of (classical) simulations, contextual 
correlations, and advantages in computation. Moreover, the idea of computation 
in GPTs, such as Spekkens' toy model \cite{SpekkesToy07}, that are intermediate 
between classical and quantum probability has been recently investigated 
\cite{JohanssonQIP2017, Johansson2017}. In particular, this GPT can be exactly 
simulated with two classical bits.

\acknowledgments

The authors would like to thank
Rafael Chaves,
Andrew Garner,
Otfried G\"uhne,
Jannik Hoffmann,
Niklas Johansson,
Jan-{\AA}ke Larsson,
Nikolai Miklin,
Miguel Navascu\'es,
Cornelia Spee,
Giuseppe Vitagliano, and
Mischa Woods for discussions.
This work has been supported by the
Austrian Science Fund (FWF): M 2107 (Meitner-Programm) and ZK 3 (Zukunftskolleg),
FQXi Large Grant ``The Observer Observed: A Bayesian Route to the 
Reconstruction of Quantum Theory'',
FIS2015-67161-P (MINECO/FEDER),
Basque Government (project IT986-16), and
ERC (Consolidator Grant 683107/TempoQ).




\appendix

\section{Brief introduction to GPTs}\label{app:gpt_def}

In quantum theory the set of effects is represented by Hermitian operators $F$ 
with $0\le F\le \openone$. This convex set has three characteristic properties. 
(i) It is a subset of the real vector space of Hermitian operators.
(ii) There exists the special operator $\openone$ representing the 
all-embracing effect.
(iii) Its shape is given by the partial order $A\le B$ which is defined by the 
condition that $B-A$ is positive semidefinite.

In a GPT, the notion of an effect is generalized by considering a 
straightforward generalization of those properties. We start with an arbitrary 
real vector space $V$ with a partial order $a\le b$. This partial order has to 
be linear in the sense that $a\le b$ implies $\lambda a\le \lambda b$ for any 
$\lambda\in \reals^+$ and $a\le b$ implies $a+c\le b+d$ if also $c\le d$. This 
turns $(V,\le)$ into an ordered vector space.

The all-embracing effect is a distinct element $e\in V$. It is is required to 
dominate all of $V$, i.e., for any $x\in V$ there is a positive number 
$\lambda$ such that $x \le \lambda e$. This property makes $e$ an order unit 
and $(V,\le,e)$ an order unit vector space. In addition, it is convenient to 
assume that the order unit is Archimedean, i.e., if $x\le \lambda e$ holds for 
all $\lambda>0$, then already $x\le 0$. In our paper we implicitly assume that 
any order unit is Archimedean.

It is sometimes convenient to let $V^+= \set{x\in V|0\le x}$. Since $a\le b$ is 
equivalent to $b-a\in V^+$, we then equivalently describe an AOU space by the 
tuple $(V,V^+,e)$. The effects in a GPT are now given by the set $V_e^+= 
V^+\cap (e-V^+)$. A measurement $\M$ in a GPT is represented by a collection of 
elements $\M= (f_k)_k\subset V_e^+$ with $\sum f_k= e$, where $f_k$ represent 
the outcomes of the measurement.

For the set of states, we note that in quantum theory one can represent a state 
$\rho$ equivalently by the linear map $\omega \colon X\mapsto \tr(\rho X)$. 
Then the normalization of $\rho$ becomes $\omega(\openone)=1$ and the condition 
$\rho\ge 0$ reads $\omega(X)\ge 0$ for all $X\ge 0$. By analogy, the set of 
states in a GPT is given by
\begin{equation}
 \ST= \set{ \omega \in V^* |\omega(e)=1 \text{ and } \omega(f)\ge 0 \text{ for 
all }f\ge 0 },
\end{equation}
 where $V^*= \set{\varphi\colon V\rightarrow \reals| \varphi \text{ is 
linear}}$ is the dual space of $V$.
With this definition, the probability for outcome $k$ of a measurement $\M= 
(f_k)_k$ is given by $p_k= \omega(f_k)$.

\section{Bound on $\soc$ for hbits}\label{app:bound}

The proof is by contradiction. Let us assume $\Omega_\mathrm{hbit}=3$, we then 
have $p(01|00)= p(10|10)= p(10|11)=1$, and $p(0|0)= p(1|1)=1$. From $p(0|0)=1$, 
we have $\omega(f_{0|0})= t_{0|0}+\vtr{w}^\dag \vtr{f}_{0|0}=1$, where 
$f_{0|0}= (t_{0|0}, \vtr{f}_{0|0})$. On the other hand, by the definition of 
effects and state, we have $\norm{\vtr{f}_{0|0}} \le \min(t_{0|0}, 1-t_{0|0})$ 
and $\norm{ \vtr{w}}_{*} \le 1 $. We then have
\begin{equation}
t_{0|0} \ge \frac{1}{2} \text{ and } \vtr{w}^\dag \vtr{f}_{0|0}=1-t_{0|0},
\end{equation}
From $f_{0|0} \ne e$ (because $p(00|00)=0$), we have $\vtr{f}_{0|0} \ne 
\vtr{0}$ and  hence $t_{0|0} < 1$.
Then, using again $\norm{ \vtr{w}}_{*} \le 1 $ and $\norm\cdot= \norm\cdot_*$, 
together with Cauchy--Schwarz inequality, we have
\begin{equation}\label{eq:wpar1}
\vtr{w}= \frac{\vtr{f}_{0|0}}{ \norm{\vtr{f}_{0|0}}}, \text{ with } 
\norm{\vtr{f}_{0|0}}=1-t_{0|0}.
\end{equation}
Similarly, we obtain
\begin{equation}\label{eq:wpar2}
\vtr{w}= \frac{\vtr{f}_{1|1}}{\norm{ \vtr{f}_{1|1}}}, \text{ with }
\norm{\vtr{f}_{1|1}}=1 - t_{1|1},
\end{equation}
and, again, $t_{1|1}<1$.

We need now to characterize the terms of the form $\omega(\II_{a|x} f_{b|y})$, 
corresponding to sequences of length two. 
We use the constraints that arise from the condition that the transformation
must map effects to effects. Then, we use that 
$\II_{a|x}$ is a linear transformation that maps the identity element to 
$f_{a|x}$, i.e., $\II_{a|x}e= f_{a|x}$. We, thus, have
\begin{equation}
\II_{a|x}= \left(
 \begin{array}{c|c} t_{a|x} & \vtr{\alpha}_{a|x}^\dag \\ \hline \vtr{f}_{a|x} & 
B_{a|x} \end{array} \right),
\end{equation}
where $\vtr{\alpha}_{a|x}$ is a $n$-dimensional vector and $B_{a|x}$ a $n\times 
n$ matrix. The expectation value 
can then be written as
\begin{eqnarray}
 \omega(\II_{a|x} f_{b|y})= \begin{matrix}\begin{pmatrix}1,& \vtr{w}^\dag 
  \end{pmatrix}\\\mbox{}\end{matrix}
  \begin{pmatrix} t_{a|x} & \vtr{\alpha}_{a|x}^\dag \\ \vtr{f}_{a|x} & B_{a|x} 
  \end{pmatrix} \begin{pmatrix} t_{b|y} \\ \vtr{f}_{b|y} \end{pmatrix}\\
  = \begin{matrix}\begin{pmatrix}t_{a|x} + \vtr{w}\cdot \vtr{f}_{a|x}, & 
  \vtr{w}^\dag B_{a|x}+ \vtr{\alpha}_{a|x}^\dag 
  \end{pmatrix}\\\mbox{}\end{matrix}\begin{pmatrix} t_{b|y} \\ \vtr{f}_{b|y} 
  \end{pmatrix}\\
  = t_{b|y} \left( t_{a|x} + \vtr{w}\cdot \vtr{f}_{a|x} \right) + 
\vtr{f}_{b|y}\cdot (B^\dag_{a|x}\vtr{w} + \vtr{\alpha}_{a|x}).
\end{eqnarray}
We can see the transformation $\II_{a|x}$, applied to the left, as a state 
transformation, i.e., Schr\"odinger picture and with normalization 
corresponding to outcome probability. Then, we have that 
${\norm{B^\dag_{a|x}\vtr{w} + \vtr{\alpha}_{a|x}}_*\le t_{a|x} + \vtr{w}\cdot 
\vtr{f}_{a|x} }$. Notice that such a condition also guarantees that $p(ab|xy) 
\ge 0$ and ${p(ab|xy)\le p(a|x)}$.

This translates to ${\norm{B^\dag_{a|x}\vtr{w} + \vtr{\alpha}_{a|x}}_* \le 1}$ 
for the case $(a,x)= (0,0)$ or $(1,1)$. In fact, in those cases we have 
$t_{a|x} + \vtr{w}\cdot \vtr{f}_{a|x}=1$, so the dual norm condition guarantee 
that $\omega(\II_{a|x} f_{b|y})\le 1$ for all $f_{b|y}$.

From the conditions $p(11|11)=0$, we obtain
\begin{eqnarray}
t_{1|1} \left( t_{1|1} + \vtr{w}\cdot \vtr{f}_{1|1} \right) + 
\vtr{f}_{1|1}\cdot (B_{1|1}^\dag\vtr{w} + \vtr{\alpha}_{1|1}) \\
 = t_{1|1} + \vtr{f}_{1|1}\cdot (B^\dag_{1|1}\vtr{w} + \vtr{\alpha}_{1|1})=0,
\end{eqnarray}
which implies together with Eq.~\eqref{eq:wpar2}, ${\norm{B^\dag_{1|1}\vtr{w} + 
\vtr{\alpha}_{1|1}}_*\le 1}$, and Cauchy--Schwarz inequality, that 
\begin{equation}\label{eq:bw}
{(B^\dag_{1|1}\vtr{w} + \vtr{\alpha}_{1|1})= -\vtr{w}} \text{ and }t_{1|1}= 
\frac{1}{2}.
\end{equation}

On the other hand, we have $p(10|10)=1$ that, by Eq.~\eqref{eq:wpar1} and 
Eq.~\eqref{eq:bw}, implies
\begin{eqnarray}
t_{0|0} \left( t_{1|1} + \vtr{w}\cdot \vtr{f}_{1|1} \right) + 
\vtr{f}_{0|0}\cdot (B_{1|1}^\dag\vtr{w} + \vtr{\alpha}_{1|1}) \\ \nonumber
 = t_{0|0} - \vtr{f}_{0|0}\cdot \vtr{w}=2 t_{0|0} - 1=1,
\end{eqnarray}
which implies $t_{0|0}=1$, i.e., a contradiction with $t_{0|0}<1$, which 
concludes the proof.

\section{Capacity of dichotomic norm cones}\label{app:dim}

In the following we prove that dichotomic norm cones describe systems of capacity two. For convenience, we repeat Eq.~\eqref{eq:def_dim} from the main 
text.
\begin{equation*}
\sum_k f_k \le e\text{ and } \omega_i f_j= \delta_{ij} \text{ for all }
 i,j.
\end{equation*}

We first show that a capacity of two is an upper bound.
\begin{lemma}
In a dichotomic norm cone, let $(\omega_k)_k$ be a collection of $d$ states and 
$(f_k)_k$ a collection of $d$ effects, such that Eq.~\eqref{eq:def_dim} is 
satisfied.
Then $d\le 2$.
\end{lemma}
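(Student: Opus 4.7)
The plan is to parametrize everything explicitly in coordinates and then chase the constraints imposed by $\omega_i f_j = \delta_{ij}$ together with the effect and state conditions. Write $f_j = (t_j, \vtr{x}_j)$ with $\norm{\vtr{x}_j}\le \min\set{t_j, 1-t_j}$ and $\omega_i\colon (t,\vtr{x})\mapsto t + \vtr{w}_i^\dagger \vtr{x}$ with $\norm{\vtr{w}_i}_*\le 1$. The orthogonality relations become
\begin{equation}
t_j + \vtr{w}_j^\dagger \vtr{x}_j = 1 \quad\text{and}\quad t_j + \vtr{w}_i^\dagger \vtr{x}_j = 0 \text{ for } i\ne j.
\end{equation}

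First I would rule out $t_j = 0$: in that case the effect inequality forces $\vtr{x}_j = \vtr 0$, so $f_j = 0$ and $\omega_j f_j = 0 \ne 1$. Hence $t_j > 0$ for every $j$. Next, the off-diagonal relation combined with the duality bound $\vtr{w}_i^\dagger \vtr{x}_j \ge -\norm{\vtr{w}_i}_* \norm{\vtr{x}_j} \ge -t_j$ shows that this bound is saturated. Saturation forces $\norm{\vtr{x}_j} = t_j$ (as soon as at least one index $i\ne j$ exists, i.e.\ when $d\ge 2$). Combining with the effect condition $\norm{\vtr{x}_j}\le 1 - t_j$ yields the upper bound $t_j \le \tfrac12$.

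The symmetric use of Cauchy--Schwarz on the diagonal relation gives $\vtr{w}_j^\dagger \vtr{x}_j = 1 - t_j$ with $\abs{\vtr{w}_j^\dagger \vtr{x}_j}\le \norm{\vtr{w}_j}_* \norm{\vtr{x}_j}\le t_j$, so $1 - t_j \le t_j$ and hence $t_j \ge \tfrac12$. Together with the previous step I get $t_j = \tfrac12$ for every $j$, provided $d\ge 2$.

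Finally I would invoke the global constraint $\sum_k f_k \le e$, which in coordinates reads $\sum_k t_k \le 1$ (since $e = (1,\vtr 0)$ and the first component of an element in $V^+$ must dominate the norm of the vector part, in particular be nonnegative). Substituting $t_k = \tfrac12$ gives $d/2 \le 1$, i.e.\ $d\le 2$. The cases $d=0,1$ are trivial. The main subtlety is justifying that the duality inequality $\abs{\vtr{w}^\dagger \vtr{x}}\le \norm{\vtr{w}}_* \norm{\vtr{x}}$ is indeed available (it is, by definition of the dual norm) and that saturation of the off-diagonal inequality forces $\norm{\vtr{x}_j} = t_j$ rather than $\norm{\vtr{w}_i}_* = 0$ -- the latter is excluded precisely because one then could not get a nonzero $\vtr{w}_i^\dagger \vtr{x}_j$; this is the only step requiring a moment's care.
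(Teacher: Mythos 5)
Your proof is correct and follows essentially the same route as the paper's: the diagonal condition $\omega_j f_j=1$ together with the dual-norm bound forces $t_j\ge\tfrac12$, and $\sum_k f_k\le e$ forces $\sum_k t_k\le 1$, giving $d\le 2$. The extra work with the off-diagonal relations (pinning down $t_j=\tfrac12$ exactly) is sound but not needed for the conclusion.
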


\begin{proof}
Any effect $f= (t,\vtr x)$ must satisfy $0\le f\le e$, i.e., $t\ge \norm{\vtr 
x}$ and $1-t\ge \norm{\vtr x}$. Furthermore, a state $\omega\colon (s,\vtr 
y)\mapsto s+\vtr w^\dag \vtr y$ must obey $\norm{\vtr w}_*\le 1$. It follows 
that $\vtr w^\dag \vtr x\le \norm{\vtr x}$ and hence $\omega f=1$ requires 
$t\ge \frac12$. Thus $\sum_k f_k\le e$ implies for $f_k= (t_k,\vtr x_k)$ the 
inequalities
\begin{equation}
 0\le \norm{\sum_k \vtr x_k}\le 1-\sum_k t_k \le 1-\frac d2.
\end{equation}
Which yields at once the assertion.
\end{proof}

In addition, if the dimension of the underlying vector space is finite, we can 
always find vectors $\vtr x$ and $\vtr w$, such that $\norm{\vtr x}=1$, 
$\norm{\vtr w}_*=1$, and $\vtr w^\dag \vtr x=1$. Hence, the states 
$\omega_{1,2}= (1,\pm \vtr w)$ and effects $f_{1,2}= (1,\pm \vtr x)/2$ obey 
Eq.~\eqref{eq:def_dim}. It follows that the capacity of a dichotomic norm cone 
is always exactly two.

\bibliography{mem_cost}

\end{document}